\begin{document}
%
\title{Distributed Pricing-Based User Association for
	Downlink Heterogeneous Cellular Networks}

\author{Kaiming~Shen
        and Wei~Yu,~\IEEEmembership{Fellow,~IEEE} 
\thanks{Manuscript submitted to JSAC Special Issue on 5G Wireless Communication Systems on Dec. 1, 2013, revised on Apr. 5, 2014, accepted on May 3, 2014.
The materials in this paper have been presented in part in IEEE
International Conference on Acoustic, Speech, and Signal Processing (ICASSP),
May 2013, Vancouver, Canada \cite{Shen}. The authors are with
the Edward S. Rogers Sr.  Department of Electrical and Computer
Engineering, University of Toronto, Toronto, ON M5S 3G4, Canada
(e-mails: \{kshen,weiyu\}@comm.utoronto.ca).

This work is supported in part by BLiNQ Networks, in part by
National Science and Engineering Research Council (NSERC) of
Canada, and in part by Ontario Centre of Excellence (OCE).}%
}

\markboth{}
{Shen, Yu: Distributed Pricing-Based User Association for
Downlink Heterogeneous Cellular Networks}

\maketitle


\begin{abstract}
This paper considers the optimization of the user and base-station (BS) association
in a wireless downlink heterogeneous cellular network under
the proportional fairness criterion. We first consider the case where each
BS has a single antenna and transmits at fixed power, and propose a
distributed price update strategy for a pricing-based user association
scheme, in which the users are assigned to the BS based on the value
of a utility function minus a price. The proposed price update
algorithm is based on a coordinate descent method for solving the dual
of the network utility maximization problem, and it has a rigorous
performance guarantee. The main advantage of the proposed algorithm as
compared to the existing subgradient method for price update is that the proposed algorithm is
independent of parameter choices and can be implemented asynchronously.
Further, this paper considers the joint user association and BS power
control problem, and proposes an iterative dual coordinate descent and the
power optimization algorithm that significantly outperforms existing
approaches. Finally, this paper considers the joint user association and BS
beamforming problem for the case where the BSs are equipped with multiple
antennas and spatially multiplex multiple users.  We incorporate dual
coordinate descent with the weighted minimum mean-squared error (WMMSE)
algorithm, and show that it achieves nearly the same performance as a
computationally more complex benchmark algorithm (which applies the
WMMSE algorithm on the entire network for BS association), while
avoiding excessive BS handover.
\end{abstract}

\begin{IEEEkeywords}
Base-station association, power control, beamforming, heterogeneous
networks (HetNets), load balancing, proportional fairness.
\end{IEEEkeywords}

\IEEEpeerreviewmaketitle

\section{Introduction}

\IEEEPARstart{M}{odern} wireless networks are designed based on the
cellular architecture in which multiple user terminals are associated
with the base-stations (BSs) to form cells. The cellular concept has
further evolved to include heterogeneous networks (HetNets)
where the BSs can transmit with widely different powers at disparate
locations, and consequently the cells can vary considerably in size.
An essential feature of HetNet is that it allows the off-loading of
traffic from the macro BSs to pico or femto BSs.  By splitting the
conventional macro cellular structure into small cells (i.e.,
femto/pico cells), the
HetNets allow for more aggressive reuse of frequencies as well as
improved coverage and higher overall throughput for the entire
network.

A main challenge in the deployment of HetNet is the appropriate
setting of the transmit power levels at different tiers of
macro/pico/femto BSs and the association of users to the different BSs
(or equivalently the determination of coverage area for each cell). The cell
association problem is further compounded when multiple antennas are
deployed at the BSs with multiple users spatially multiplexed using
multiple-input and multiple-output (MIMO) beamforming techniques.
Conventionally, the downlink cell coverage areas are
determined according to the signal-to-interference-plus-noise ratio
(SINR).  Each user terminal simply associates with the BS from which
the received SINR is the highest---herein referred to as the max-SINR
rule. A key problem with the max-SINR BS association is that it does
not account for the varying data traffic pattern in the network, hence
it can lead to poor load balancing.  Load balancing is essential for
wireless networks with small cells, because femto/pico BSs are often
deployed to alleviate traffic ``hot-spots" with higher-than-average
user density.

This paper addresses the downlink user association problem for HetNets
from an optimization perspective under the proportional fairness
criterion. We follow a pricing-based strategy in which the users
are associated with the BS according to the value of a utility
minus a price---a strategy first adopted in \cite{andrews}, where
a price update method based on the subgradient
algorithm is proposed.  The main novelty of this paper is that we
advocate an alternative price update method based on a coordinate
descent approach on the dual of the network utility maximization
problem. The proposed algorithm has the advantage that it is free
of parameter choices and that it can be implemented asynchronously
across the BSs. This paper further proposes joint optimization of
BS association with downlink power control and with beamforming.
We show that the proposed
pricing-based distributed user association can significantly improve
the conventional max-SINR association.
Throughout the paper, we use
the terms {\it BS association} and {\it user association}
interchangeably--- the former emphasizes a user perspective, while the latter a BS perspective.

\subsection{Related Work}

The BS association problem has been considered extensively in the
literature. While the early works in this area \cite{yates, hanly,
farrokh, shroff, goodman} mostly deal with the code-division multiple access
(CDMA) system, they already reveal that the joint optimization of BS
association and transmit power can significantly improve the overall
network performance. These earlier works, as well as some of the more
recent ones \cite{lev,qian,silva,jtwang,harri,duy}, tend to focus on
the power-based optimization objectives, e.g., minimizing
the total transmit power under a predefined set of minimum SINR
constraints at the user terminals.
While the power minimization formulation may be appropriate for
networks with fixed rate and fixed quality-of-service (QoS)
requirement,
modern wireless networks often maximize the objective of
the overall throughput, or more generally, a network utility function
across all users in the network. In this realm, \cite{kuang, chitti,
corroy, vu} consider the maximization of the sum rate across the
network, while \cite{hong, wmmse_luo} consider the weighted sum rate
objective for the BS association problem.  More general network
utility maximization formulation is considered in \cite{andrews,
C3, guvenc, bu}, which use a proportional fairness objective
function, while \cite{corroy} considers max-min fairness in addition.

This paper considers the network utility maximization problem under
the proportional fairness, i.e., log-utility objective for the downlink
of a wireless cellular network.
Because the BS association problem is inherently a discrete
optimization problem involving the assignment of users to BSs,
finding the optimal solution for such a problem is nontrivial.
While conventional BS association simply uses the max-SINR rule, it is
also clear from a network utility maximization or load-balancing
perspective that max-SINR is far from being adequate. In this
direction, \cite{guvenc} proposes an intuitive idea of expanding the
coverage area of small cells by adding constant bias terms to the SINR
values, so as to balance the load among different cells (although
\cite{guvenc} does not analyze what the optimal bias terms should be).
Other common heuristics proposed in the literature include
that in \cite{bu}, \cite{C3}, \cite{madan}, which optimize BS
association through the greedy method, and \cite{lee}, which
randomly assigns each user terminal to the BS with the probability
proportional to the estimated throughput, and \cite{kuang, chitti, kim},
which devise their respective methods based on the relaxation
heuristic. In addition to the network utility maximization
formulation, \cite{jiang} addresses the BS assignment problem from a
game theory perspective (as the assignment problem can be thought of
as a game among the BSs), where the Nash equilibrium of the game is
found.

The BS association algorithm proposed in this paper is most closely
related to \cite{andrews}, where under fixed transmit powers, a dual
pricing method based on the subgradient update is proposed. This paper adopts this pricing approach, but makes
further progress in identifying an alternative price update
method. Other related works on BS association assuming fixed
transmit powers include
\cite{corroy}, which considers a simple model consisting of only a
single pair of macro and pico BSs, and \cite{hong}, which considers a
special situation where user terminals may not report their channel
state information (CSI) truthfully out of selfish motivation.

For the purpose of load balancing and interference management,
it has also been well recognized in the existing literature that BS
association and transmit power levels need to be optimized jointly.
From this joint optimization perspective, an intuitive but heuristic
idea is to optimize BS association and power levels in an iterative
fashion, as suggested in \cite{chitti,vu,madan}.  The approach of
\cite{kim} addresses the joint optimization problem using duality
theory, but only for a relaxed version of the problem with the
discrete constraints eliminated. In general, BS association and power
optimization for weighted rate-sum maximization are both
challenging problems, but there are some special cases where the
globally optimal solution to the joint optimization problem can be
found. For example, in \cite{luo} the optimal settings of BS
association and power levels that maximize the sum throughput are
obtained under certain restricted conditions for the case
where the number of user terminals and the number of BSs are the same.
Instead of searching for globally optimal solutions, this paper treats
the joint BS association and power optimization problem from an
iterative optimization perspective.  Our main
contribution here is some key observations on the role of
pricing-based BS association in this heuristic approach.

For multi-cell networks with multiple antennas at the BSs, this
paper also considers the joint optimization of BS association and
beamforming for the scenario where multiple users can be spatially
multiplexed within each cell. In this domain, \cite{harri, lu} provide
algorithms for such a joint optimization problem, but only under
the power minimization objective. In \cite{kuang}, BS association,
transmit power and beamforming vectors are optimized through
coordinate descent. Note that the beamforming problem by itself
(assuming fixed BS association) is well studied in the literature
(e.g., \cite{farrokhi_1, wiesel, hayssam, wmmse, christensen}).  In
this regard, the WMMSE algorithm \cite{wmmse} is of particular
interest, because it can handle weighted rate sum maximization, hence
the proportional fairness objective. A recent work \cite{wmmse_luo}
proposes a modification of the WMMSE algorithm that is capable of optimizing
BS association and the beamforming vectors jointly. The WMMSE
algorithm of \cite{wmmse_luo} is, however, computationally complex; further it induces
excessive BS handover. One of the contributions of this paper is that
the pricing-based BS association can be incorporated with
WMMSE beamforming design to significantly reduce the computational
complexity of joint BS association and beamforming method of \cite{wmmse_luo}, while achieving
nearly the same performance and avoiding excessive handover.

\subsection{Main Contributions}
This paper considers the optimal joint BS association with power
control and with beamforming for the downlink HetNets under the
proportional fairness objective. The main contributions of this paper
are as follows:

\subsubsection{BS Association}
For a single-input and single-output (SISO) network with fixed
transmit powers and with flat-fading channels, this paper proposes a
distributed pricing-based user association scheme with a price update
method based on coordinate descent in the dual domain. The proposed
price update algorithm has faster convergence than the conventional
subgradient method \cite{andrews}. It is a fundamental building block
for subsequent generalizations to the frequency-selective case and to
the cases with power control and MIMO beamforming. Moreover, we
provide a duality-gap based analysis to bound the performance
error of the proposed algorithm.

\subsubsection{Joint BS Association and Power Control}
This paper proposes an iterative optimization approach for the joint
BS association and power control problem. We make a key observation
that the choice of BS association method is crucial in joint
optimization. In particular, when used in conjunction with power control,
the conventional max-SINR association
tends to exacerbate load imbalance, while the proposed pricing-based
association alleviates load imbalance.
To quantify the performance of the proposed iterative approach, we
devise a benchmark algorithm based on dual optimization and by solving
the nonconvex power optimization problem from multiple starting points. We show
that the proposed iterative approach provides comparable performance, while
being much less computationally complex.


\subsubsection{Joint BS Association and Beamforming}
When BSs are equipped with multiple antennas and have the ability to
spatially multiplex multiple users within each cell, this paper shows
that the optimization of BS association and beamforming can be
decoupled without significantly affecting the overall performance.
This allows us to propose a two-stage method combining the joint BS
association and power control algorithm as the first stage followed by
a per-cell WMMSE step in the second stage. The proposed approach is
significantly less complex than the use of WMMSE algorithm for BS
association over the entire network \cite{wmmse_luo}, while at the
same time avoiding excessive BS handover.

\subsection{Organization of This Paper}

The rest of this paper is organized as follows. Section II introduces
the problem formulation for the BS association problem for a SISO
network.  Section III analyzes a pricing based BS association approach
under fixed power. The algorithm proposed in Section III is a key
component in subsequent developments.  Section IV considers the
joint BS association and power control problem.  Section V addresses
the joint BS association and beamforming problem for a MIMO network.
Performance evaluations are provided in Section VI. Conclusions are
drawn in Section VII.


\section{BS Association Problem for SISO Networks}

Consider a downlink cellular network consisting of $L$ BSs with fixed
transmit power levels (which may differ from BSs to BSs), and $K$
active user terminals across the geographic area covered by the
network. Both the BSs and the user terminals are equipped with a
single antenna each. Let $i$ be the index of user terminals,
$i\in\{1,2,\ldots,K\}$, and let $j$ be the index of BSs,
$j\in\{1,2,\ldots,L\}$. Let the total bandwidth of the system be $W$,
which is shared by all BSs (i.e., frequency reuse factor is one).
To simplify the problem, we assume flat-fading channels and
frequency-flat PSD levels at the BSs, thus the SINR values are
constants across the frequencies. Let $h_{ij}\in\mathbb{C}$ be the
channel between user $i$ and BS $j$, and let $p_{j}$ be the transmit
PSD level at BS $j$. If the user $i$ is to be associated with the BS
$j$, its SINR value is then
\begin{equation}
\label{sinr}
\text{SINR}_{ij}\left(\textbf{p}\right) = \frac{|h_{ij}|^{2}p_j}{\sum_{j'\neq
j}|h_{ij'}|^{2}p_{j'}+\sigma_i^2},
\end{equation}
where $\textbf{p}=\left[p_1,\cdots,p_L\right]^T$; $\sigma_i^2$ is the PSD
of the background additive white
Gaussian noise (AWGN). This paper assumes that each user is associated
with one BS at a time.

This paper adopts a proportionally fair network utility optimization
framework of maximizing the sum log-utility across all the users in
the entire network. A key step in the problem formulation is an
observation made in \cite{andrews}, where it is shown that for a
given set of users associated with a BS, round-robin among these users
is the proportionally fair schedule (assuming constant and
flat-fading channel and flat transmit PSD).  Hence, if a total of
$k_j$ users are associated with BS $j$, in order to maximize the
proportional fairness objective, each of them should be allocated
$1/k_j$ of the total time/frequency resource. In this case, if a user
$i$ is associated with BS $j$, its rate is given by
\begin{equation}
R_{ij}\left(\textbf{p},k_j\right) = \frac{W}{k_{j}}
\log\left(1+\frac{\text{SINR}_{ij}}{\Gamma}\right),
\label{rate_BS}
\end{equation}
where $\Gamma$ is the SNR gap determined by practical coding and
modulation schemes used. 

Let $x_{ij}$ be a binary variable (1 or 0) denoting whether or not
user $i$ is associated with BS $j$. The BS association problem is that
of jointly determining $x_{ij}$ and the transmit powers $p_j$ at each
BS to maximize the overall network utility, which can be written as:
\begin{subequations}
\label{power_problem}
\begin{eqnarray}
\mathop\text{maximize}\limits_{\mathbf{X},\mathbf{p},\mathbf{k}} & &
\sum_{i,j}x_{ij}\log\left(R_{ij}(\mathbf{p},k_j)\right)
	\label{obj_Power} \\
\text{subject to}&&0 \leq p_j \leq \overline{p}_j, \; \forall j
        \label{power_constraint_Power} \\
&&\sum_{j} x_{ij} = 1, \; \forall i
	\label{pp_1} \\
&&\sum_{i} x_{ij}=k_j, \; \forall j
	\label{compute_k} \\
&&\sum_{j} k_{j} = K
	\label{pp_2} \\
&&x_{ij}\in\left\{ 0,1 \right\}, \; \forall i,\forall j
  \label{x_constraint_Power}
\end{eqnarray}
\end{subequations}
where $\mathbf{X}=[x_{ij}]$; $\mathbf{k}=[k_1,\cdots,k_L]^T$;
$\overline{p}_j$ is the PSD constraint of BS $j$. Constraint
(\ref{pp_1}) ensures that each user can only associate with one BS, and
constraint (\ref{pp_2}) states that all users in the network are
served. Note that although $k_j$ is completely determined by $x_{ij}$,
it is convenient to keep $k_j$ as an optimization variable in
subsequent analysis.

\section{BS Association in SISO Networks Under Fixed Power}

The joint BS association and power control problem (\ref{power_problem})
is a mixed discrete optimization (over the BS association) and
nonconvex optimization problem (over the powers), for which finding
its global optimum is expected to be very challenging. In this section,
we focus on a simplified problem setting with transmit power spectral
density (PSD) levels fixed a \emph{priori}. The joint optimization
problem with power control is treated in the subsequent section.

\subsection{Problem Formulation}

When $\textbf{p}$ is fixed in (\ref{power_problem}), all SINR values
are predefined by (\ref{sinr}). We introduce parameter \begin{equation}
a_{ij} =
\log\left(W\log\left(1+\frac{\text{SINR}_{ij}}{\Gamma}\right)\right).
\label{parameter_a}
\end{equation}
Substituting $a_{ij}$ back into (\ref{power_problem}), we simplify
the BS association problem under the fixed powers as
\begin{subequations}
\label{bsa_problem}
\begin{eqnarray}
\mathop\text{maximize}\limits_{\mathbf{X},\mathbf{k}} & &
\sum_{i,j}a_{ij}x_{ij} - \sum_{j}k_{j}\log\left(k_{j}\right)
	\label{obj2_BS} \\
\text{subject to}&&\sum_{j} x_{ij} = 1, \; \forall i
	\label{num_association_BS} \\
&&\sum_{i} x_{ij}=k_j, \; \forall j
    \label{num_user_BS}\\
&&\sum_{j} k_{j} = K
    \label{sum_num_user_BS}\\
&&x_{ij}\in\left\{ 0,1 \right\}, \; \forall i, \forall j
\end{eqnarray}
\end{subequations}
This rest of this section presents a pricing approach, together with a
novel price update method, for solving the above problem.

\subsection{Lagrangian Dual Analysis}

The problem formulation (\ref{bsa_problem}) is first proposed in
\cite{andrews}, where it is shown that a dual analysis can yield
considerable insight. An important idea is that the dual variables can
be interpreted as the BS-specific prices, which give rise to the dual
pricing approaches for BS association.

Introduce dual variables $\bm{\mu}=[\mu_1,\cdots,\mu_L]^T$ for
constraint (\ref{num_user_BS}), and $\nu$ for constraint
(\ref{sum_num_user_BS}). The Lagrangian function with respect to these
two constraints is
\begin{multline}
\label{Lagrangian}
L(\mathbf{X},\mathbf{k},\bm{\mu},\nu) = \sum_{i,j}a_{ij}x_{ij} - \sum_{j}k_j\log(k_j)\\
- \sum_j{\mu_j\left(\sum_{i} x_{ij}-k_j \right)} - \nu\left(\sum_{j} k_j-K\right).
\end{multline}
The dual function $g(\cdot)$ can then be written as
\begin{equation}
g(\bm{\mu},\nu)  = \left\{
\begin{array}{cl}
\mathop\text{maximize}\limits_{\mathbf{X},\mathbf{k}} & L(\mathbf{X},\mathbf{k},\bm{\mu},\nu) \\
{\rm s.t.\ } & \sum_{j} x_{ij}=1, \; i = 1, \ldots, K \\
&x_{ij} \in \{0,1\},\; \forall i,\forall j
\end{array}
\right.
\label{dual_not_enclosed}
\end{equation}
The maximization of the Lagrangian has the following explicit analytic solution:
\begin{equation}
x^*_{ij}=
\begin{cases}
1, \ \text{if } j=j^{(i)}\\
0, \ \text{if } j\neq j^{(i)}
\end{cases}
\text{where } j^{(i)}=\arg\mathop\text{max}\limits_{j'} \left(a_{ij'}-\mu_{j'}\right)
\label{x*}
\end{equation}
and
\begin{equation}
k_j^*=e^{\mu_j-\nu-1}.
\label{K*}
\end{equation}
Note that if $j^{(i)}$ in (\ref{x*}) is not unique, $x_{ij}$ can be
assigned value 1 for any of the BSs with maximum $(a_{ij}-\mu_j)$
without affecting the value of dual function.

The solution of $x_{ij}$ in (\ref{x*}) is quite intuitive. The dual
variable $\mu_j$ is the price at BS $j$, while $a_{ij}$ is the utility
of the user $i$ if it associates with BS $j$. Each user maximizes its
utility $a_{ij}$ minus the price among all possible BSs, while the BSs
choose their prices to balance their loads.

This pricing interpretation has already been given in \cite{andrews},
which also proposes a subgradient algorithm for updating the prices. The
present paper carries this idea one step further by observing that we
can explicitly write down the Lagrangian dual optimization problem of
(\ref{bsa_problem}). This additional observation gives rise to a better
price update method.

Substituting (\ref{x*}) and (\ref{K*}) back into
(\ref{dual_not_enclosed}), we obtain the dual objective in closed-form as:
\begin{equation}
g(\bm{\mu},\nu)=\sum_i\mathop\text{max}\limits_{j}
\left(a_{ij}-\mu_j\right)+\sum_j 
e^{\mu_j-\nu-1}
+\nu K.
\label{dual_enclosed}
\end{equation}
The Lagrangian dual problem of (\ref{bsa_problem}) is now the
minimization of $g(\cdot)$ over $\bm{\mu}$ and $\nu$:
\begin{equation}
\label{prob:dual}
\mathop\text{minimize }\limits_{\bm{\mu},\nu} g(\bm{\mu},\nu)
\end{equation}
The Lagrangian duality theory in optimization states that the updating
of the prices can be done via the minimization of $g(\bm{\mu},\nu)$,
e.g., using the subgradient algorithm \cite{andrews}.  One of the main
contributions of this paper is that by taking advantage of the
particular form of $g(\bm{\mu},\nu)$, the price update can
alternatively be done using a coordinate descent approach in the dual
domain. In the subsequent sections, we first review the subgradient
method, then present the new coordinate descent method.

After the dual solution is obtained for (\ref{prob:dual}), we need to
recover the primal variable $x_{ij}$ from the dual solution. This can
be done through (\ref{x*}), but there is the possibility that a user
has more than one BS with the same maximal value for $(a_{ij}-\mu_j)$.
Such ties can be resolved using heuristics. In general,
we would like to keep $k_j$ as close to $e^{\mu_j-\nu-1}$ as possible.
In our simulation experience, only a very small number of users are
typically involved in ties, so tie-breaking via exhaustive search is
feasible.

It should be noted that because the original optimization problem
(\ref{bsa_problem}) is discrete in nature, solving the dual is not the
same as solving the original primal problem---a positive duality gap
can exist.  Nevertheless, the dual optimum solution often leads to good
primal solutions.

\subsection{Subgradient Method}

To solve the dual optimization problem (\ref{prob:dual}), we
observe first that if $\bm{\mu}$ is fixed, then $g(\cdot)$ is a
differentiable convex function of $\nu$, so the optimal $\nu$ can be
found as\footnote{Strictly speaking, the algorithm described in this
section is a combination of alternating minimization between
$\bm{\mu}$ and $\nu$, and subgradient method on $\bm{\mu}$.
A full subgradient implementation would involve a subgradient update on $\nu$ as well, i.e.,
$
\nu^{(t+1)} = \nu^{(t)} - \alpha^{(t)}
\left( K - \sum_j e^{\mu_j^{(t)}-\nu^{(t)}-1} \right).
$
}
\begin{equation}
\nu^{(t+1)} = \log \frac{\sum_j e^{\mu_j^{(t)}-1}}{K},
\label{nu*}
\end{equation}
where the time index $t$ is included here to indicate that $\bm{\mu}$
and $\nu$ need to be updated iteratively in a sequential
order.
However, $g(\cdot)$ is not a differentiable function of $\mu_j$,
so instead of taking its derivative with respect to $\mu_j$, the
subgradient method updates $\mu_j$'s in each step according to
\begin{equation}
\mu_j^{(t+1)} = \mu_j^{(t)} - \alpha^{(t)} \left( e^{\mu_j^{(t)}
-\nu^{(t)}-1}-\sum_{i}x_{ij}^{(t)} \right), \; j = 1, \ldots, L
\label{subgradient}
\end{equation}
where $\alpha^{(t)}$ is the step size and $x_{ij}^{(t)}$ is determined
by $\mu_j^{(t)}$ according to (\ref{x*}).  The use of subgradient method
for price update in the BS association problem is first proposed in
\cite{andrews}.

Because the dual problem is always convex, the subgradient
method is guaranteed to converge to the globally optimal solution to the
dual problem (\ref{prob:dual}).  However,
the convergence speed of the subgradient method
depends heavily on the choice of step size $\alpha^{(t)}$. Possible
choices of $\alpha^{(t)}$ include constant step size (but the constant
is difficult to choose) or diminishing step sizes (which guarantee
convergence but can be quite slow in practice). As a baseline for
comparison, this paper adopts the self-adaptive scheme of
\cite{bertsekas_convex} as suggested in \cite{andrews}. We refer the
detailed algorithm description to \cite{bertsekas_convex}, and only
mention that the scheme involves quite a few parameters, namely
$\gamma_t$, $\rho\geq 1$, $\beta<1$, as well as $\delta_1$ and
$\delta$. 
Still, the convergence speed is still very much parameter dependent,
as seen in the simulation section later in this paper.

We remark that because all the $\mu_j$'s need to be updated at the
same time using the same step size (in order to ensure convergence),
the distributed implementation of the subgradient method requires
synchronized price updates across the BSs. This is a significant
drawback, as synchronization is not necessarily easy to achieve.  The
main advantage of the dual coordinate descent method proposed in the
next section is that it is free of parameter choices and it does not
require synchronization.

\subsection{Dual Coordinate Descent (DCD) Method}

The main contribution of this paper is a coordinate descent
\cite{bertsekas_coordinate}
approach in the dual domain for solving (\ref{prob:dual}).
The key idea is to recognize that
the dual function is expressed in a closed form in
(\ref{dual_enclosed}). First, fixing all the $\mu_j$'s, we see that optimal
$\nu$ can be updated by
(\ref{nu*}). Next, fixing $\nu$ and all $\mu_j$'s except one of them, we see
that $g(\cdot)$ is in fact the sum of a continuous piece-wise linear
function and an exponential function. So we can take its left or right
derivatives and choose $\mu_j$ to be such that the left derivative at
$\mu_j$ is less than or equal to zero, and the right derivative is
greater than or equal to zero. Mathematically, define two functions
$f_1(\cdot)$ and $f_2(\cdot)$ as:
\begin{equation}
f_1(\mu_j)=|\mathcal{U}_j|,
\label{f_1}
\end{equation}
where
$\mathcal{U}_j=\left\{i\left| a_{ij}-\mu_j=\max_{j'} (a_{ij'}-\mu_{j'}) \right. \right\}$,
and
\begin{equation}
f_2(\mu_j)=e^{\mu_j-\nu-1}.
\label{f_2}
\end{equation}
It is easy to see that the left partial derivative of $g(\cdot)$ with
respect to $\mu_j$ is exactly $f_2(\mu_j)-f_1(\mu_j)$. Hence, fixing
all other dual variables, the $\mu_j$ that minimizes $g(\cdot)$ is just
\begin{equation}
\mu_j^{(t+1)} = \sup \left\{\mu_j \left| f^{(t)}_2(\mu_j) -
f^{(t)}_1(\mu_j) \leq 0 \right\}
\label{mu_update}
\right. .
\end{equation}
This leads to the DCD method described in Algorithm 1.

\begin{figure}[t]
\begin{minipage}[b]{0.44\linewidth}
  \centering
	  \centerline{\includegraphics[width=4.8cm]{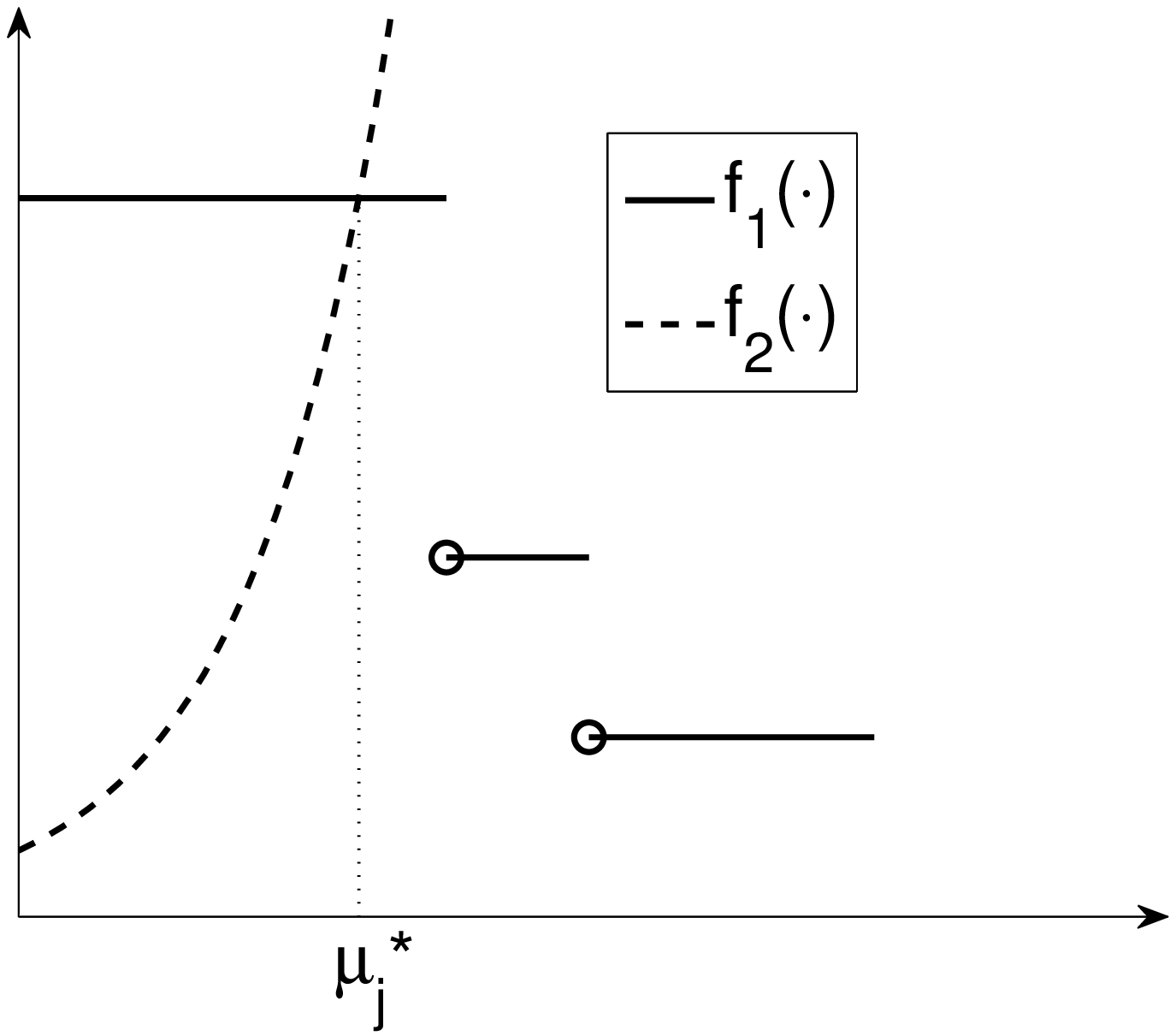}}
  \centerline{(a) $f_1$ and $f_2$ intersect}\medskip
\end{minipage}
\hfill
\begin{minipage}[b]{0.48\linewidth}
  \centering
	  \centerline{\includegraphics[width=4.8cm]{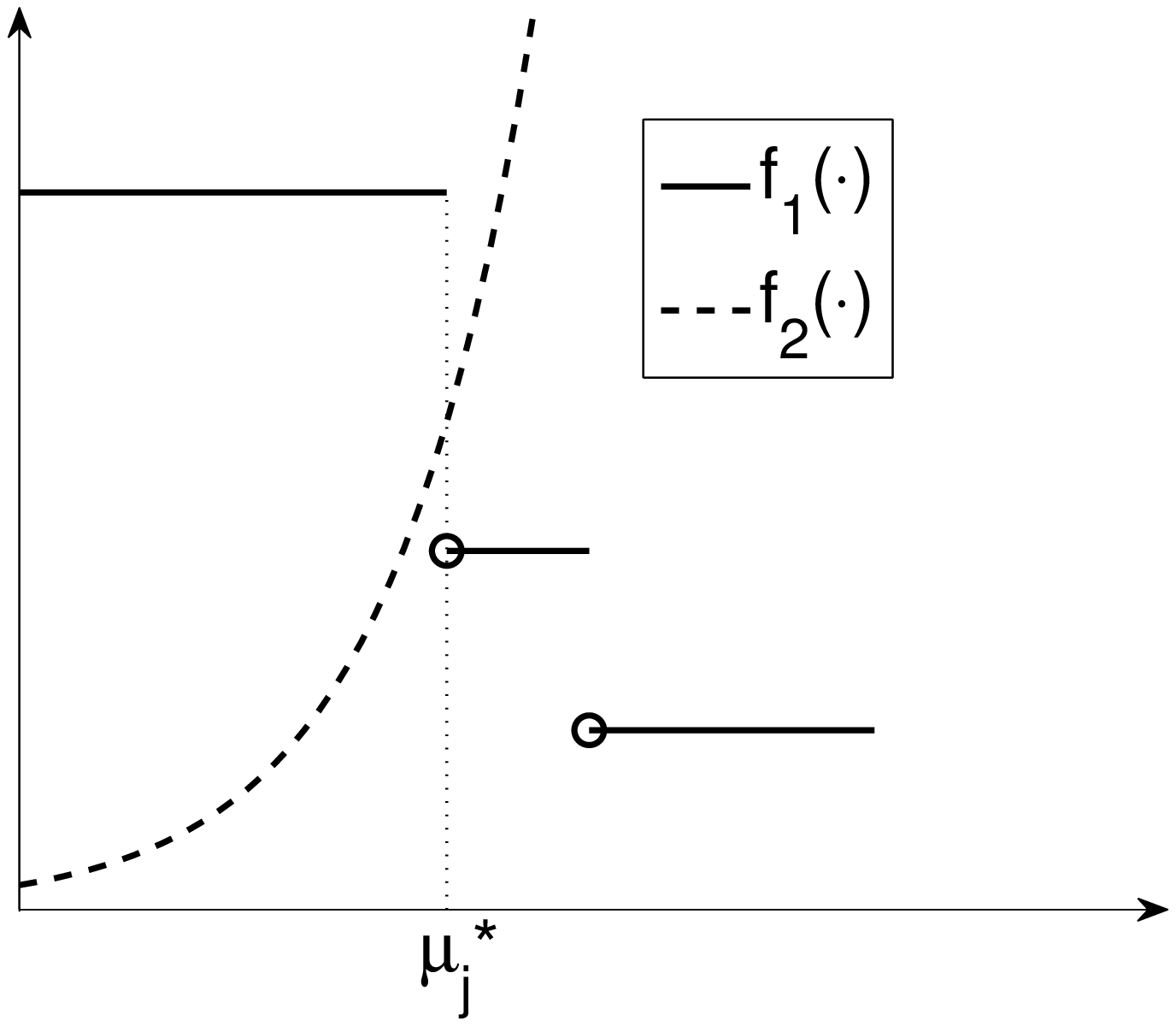}}
  \centerline{(b) no intersection}\medskip
\end{minipage}
\caption{Two cases of updating $\mu_j^*$ in dual coordinate descent }
\label{fig:f1-f2}
\end{figure}

\begin{algorithm}
\caption{Dual Coordinate Descent Method}
\label{alg:Dual Coordinated Descent}
\begin{algorithmic}
\STATE \textbf{Initialization: }Set $\mu_j=0$, $\forall j$.
Set $\nu=\log\frac{\sum_{j}e^{\mu_j-1}}{K}$. \\
\REPEAT
\FOR{each $j \in \{1,\cdots,L\}$}
\STATE
1) Update $\mu_j$ according to (\ref{mu_update}).\\
\ENDFOR \\
2) Update $\nu$ according to (\ref{nu*}). \\
\UNTIL the dual objective value converges. \\
3) Set user-BS association according to (\ref{x*}). Resolve ties if
necessary. \\
\end{algorithmic}
\end{algorithm}

The DCD method is quite intuitive. The dual variable $\mu_j$ is the
price at BS $j$, while $a_{ij}$ is the utility of the user $i$ if it
is associated with BS $j$. Each user chooses to associate with the BS that
maximizes its utility minus the price, while the BSs choose their
prices in an iterative fashion to balance their loads. Fig.~1
illustrates the price update condition that seeks $\mu_j^*$ to match
$f_1(\mu_j^*)$ and $f_2(\mu_j^*)$. Here, $f_1(\cdot)$ is a step
function. The functions $f_1(\cdot)$ and $f_2(\cdot)$ may not
intersect, but the optimal $\mu_j^*$ can always be determined
uniquely.

As mentioned earlier, a main advantage of the DCD method
is that BSs do not need to synchronize their price updates. In fact,
the order of price updates in Algorithm 1 can be arbitrary. Since each
dual update step always produces a dual objective value that is
nonincreasing, the iterative algorithm is always guaranteed to
converge.

However, it should be noted that since the dual objective
(\ref{dual_enclosed}) is not a differentiable function, coordinate
descent is not guaranteed to give a global optimum solution to the dual
optimization problem (\ref{prob:dual}), and most likely not the optimum
solution to the primal problem (because a duality gap can exist).
Nevertheless, the convergence point for DCD still gives fairly good
solutions to the original BS association problem.

The proposed dual coordinate descent method is inspired by the
development of auction algorithm \cite{bertsekas_auction} for the
one-to-one assignment problem. The BS assignment problem in this
section can be thought of as a generalization of the assignment
problem solved by the auction algorithm \cite{bertsekas_auction}
from the 1-to-1 to the $N$-to-1 case.

\subsection{Duality Gap Bound}

Although the DCD method is not guaranteed to
converge to the global optimum of the dual problem, and further,
because of the integer constraints, there may be a non-zero optimal
duality gap between the primal and the dual problems, the Lagrangian
dual analysis nevertheless gives useful upper bounds on the optimum
value of the original optimization problem.  In particular,
$g(\bm{\mu},\nu)$ is an upper
bound on $f_{\text{o}}({\mathbf{X}^*},{\mathbf{R}^*})$, and the gap
is tightest when $(\bm{\mu},\nu)$ are dual optimal.
The following result shows that this optimal duality gap can
be expressed analytically in closed form.

\newtheorem{theorem}{Proposition}
\begin{theorem}
For the BS association problem (\ref{bsa_problem}), the gap between
the objective function $f_{\text{o}}(\mathbf{X},\mathbf{R})$ obtained
from the dual coordinate descent algorithm and the global optimum utility is
bounded by $\sum_j k_j \log\left(k_j/e^{\mu_j-\nu-1}\right)$, where
$k_j$ is the number of users associated with BS $j$ and
$(\bm{\mu},\nu)$ are the values of the dual variables at convergence.
\end{theorem}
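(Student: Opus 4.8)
The plan is to bound the primal optimality gap by the \emph{duality} gap evaluated at the converged point, and then to show that this latter quantity admits the stated closed form. Write $f_{\text{o}}^*$ for the global optimum of (\ref{bsa_problem}), let $(\mathbf{X},\mathbf{R})$ be the primal solution recovered from the converged dual variables $(\bm{\mu},\nu)$ via (\ref{x*}) (with ties resolved), and let $k_j=\sum_i x_{ij}$ be the resulting loads. Since $(\mathbf{X},\mathbf{R})$ is primal feasible we have $f_{\text{o}}(\mathbf{X},\mathbf{R})\le f_{\text{o}}^*$, and since $(\bm{\mu},\nu)$ is dual feasible, weak duality gives $f_{\text{o}}^*\le g(\bm{\mu},\nu)$. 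Chaining these,
\begin{equation}
f_{\text{o}}^* - f_{\text{o}}(\mathbf{X},\mathbf{R}) \le g(\bm{\mu},\nu) - f_{\text{o}}(\mathbf{X},\mathbf{R}),
\end{equation}
so it suffices to evaluate the right-hand side exactly.

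Next I would compute $g(\bm{\mu},\nu)-f_{\text{o}}(\mathbf{X},\mathbf{R})$ term by term, using $\log R_{ij}=a_{ij}-\log k_j$ so that $f_{\text{o}}(\mathbf{X},\mathbf{R})=\sum_{i,j}a_{ij}x_{ij}-\sum_j k_j\log k_j$. The key simplification is that the recovered $x_{ij}$ is, by (\ref{x*}), exactly the maximizer of $a_{ij}-\mu_j$ over $j$, so the piecewise-linear part of the dual (\ref{dual_enclosed}) collapses to $\sum_i\max_j(a_{ij}-\mu_j)=\sum_{i,j}a_{ij}x_{ij}-\sum_j\mu_j k_j$. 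Substituting this into (\ref{dual_enclosed}) and subtracting $f_{\text{o}}(\mathbf{X},\mathbf{R})$, the common utility terms $\sum_{i,j}a_{ij}x_{ij}$ cancel, leaving
\begin{equation}
g(\bm{\mu},\nu)-f_{\text{o}}(\mathbf{X},\mathbf{R}) = \sum_j k_j\log k_j - \sum_j \mu_j k_j + \sum_j e^{\mu_j-\nu-1} + \nu K.
\end{equation}

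Finally I would invoke the two fixed-point identities of Algorithm 1. Because $\nu$ is set by (\ref{nu*}) at convergence, the stationarity condition $\sum_j e^{\mu_j-\nu-1}=K$ holds, and because each user is assigned to exactly one BS, $\sum_j k_j=K$; hence $\sum_j e^{\mu_j-\nu-1}+\nu K=(\nu+1)\sum_j k_j$. Combining with the remaining terms gives
\begin{equation}
g(\bm{\mu},\nu)-f_{\text{o}}(\mathbf{X},\mathbf{R}) = \sum_j k_j\left(\log k_j - (\mu_j-\nu-1)\right) = \sum_j k_j\log\!\left(\frac{k_j}{e^{\mu_j-\nu-1}}\right),
\end{equation}
which is precisely the claimed bound. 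The algebra is routine; the one point requiring care is that the result holds at a \emph{fixed point} of the iteration rather than at an arbitrary dual point---it is exactly the $\nu$-stationarity identity $\sum_j e^{\mu_j-\nu-1}=K$ together with primal feasibility $\sum_j k_j=K$ that make the residual terms collapse into the single load-mismatch expression. I would also note that, since $e^{\mu_j-\nu-1}=k_j^*$ of (\ref{K*}) and both $\{k_j\}$ and $\{k_j^*\}$ sum to $K$, this bound is a non-normalized KL divergence and is therefore nonnegative, consistent with $f_{\text{o}}(\mathbf{X},\mathbf{R})\le f_{\text{o}}^*$.
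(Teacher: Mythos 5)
Your proof is correct and takes essentially the same approach as the paper's Appendix A: both bound the optimality gap by the duality gap $g(\bm{\mu},\nu)-f_{\text{o}}(\mathbf{X},\mathbf{R})$ via weak duality, then collapse it using the fact that the recovered $x_{ij}$ attains $\max_j(a_{ij}-\mu_j)$ together with the $\nu$-update identity $\sum_j e^{\mu_j-\nu-1}=K$ from (\ref{nu*}). The only difference is presentational (you compute $g-f_{\text{o}}$ directly while the paper rewrites $f_{\text{o}}$ as $g$ minus the residual), and your closing observation that the bound is a non-normalized KL divergence, hence nonnegative, is a correct extra remark not made in the paper.
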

\begin{proof}
See Appendix A.
\end{proof}

Note that whenever $k_j = e^{\mu_j-\nu-1}$ for a BS $j$, as in
Fig.~1(a), the user association is close to being optimal at that BS,
as it does not contribute to the duality gap. When a BS is involved in
ties, the duality gap is minimized when $k_j$ is made as close to
$e^{\mu_j-\nu-1}$ as possible.

\section{Joint BS Association and Power Control in SISO Networks}

Thus far, we have considered the downlink BS association problem with
fixed BS transmit powers.  However, the setting of downlink power
levels is crucial for determining cell range, especially in a HetNet
where pico BSs may have a very different transmit power as compared to
macro BSs. This motivates us to investigate joint BS association and
downlink power optimization.

\begin{algorithm}[t]
\caption{Iterative BS Association and Power Control}
\label{alg:Outside Incorporation Method}
\begin{algorithmic}
\STATE \textbf{Initialization: } Set $p_j$'s to feasible values. \\
\REPEAT \STATE
1) Run DCD algorithm for fixed $p_j$'s using Algorithm 1.\\
2) Do power control for network utility maximization under fixed BS association to
	obtain new set of $p_j$'s. \\
\UNTIL convergence. \\
\end{algorithmic}
\end{algorithm}

\subsection{Iterative DCD and Power Control}

The main algorithm proposed in this section is a simple and
straightforward iteration between pricing-based BS association and
power control as shown in Algorithm 2. The idea is to run the DCD
algorithm under fixed power in order to achieve better load balancing,
and to run a power control method under fixed user association for
interference mitigation. The power optimization algorithm should also
aim to maximize the overall network utility function. One possible
implementation of such a power optimization is included in
Appendix B, where the Newton's method is used for maximizing the log utility.
As long as both the BS association and the power control
steps in the iteration aim to increase the same
objective function, the overall algorithm is guaranteed to converge
(albeit not necessarily to the global optimum, since the problem is
not convex).






Although the main idea of iterative BS association and power
optimization appears straightforward, this paper makes a key
observation that the use of utility-maximization based BS association
algorithm is crucial here.  The following simple example shows that if
instead the max-SINR association rule
is used iteratively with power control, the overall process
may not work well at all.

Consider a two-BS scenario with the initial BS assignment
as shown in Fig.~\ref{fig:bad_iteration}. If we apply power control,
BS $A$ would raise its transmit power due to the fact
that it serves a large number of users, while BS $B$ would lower its
power.
But once BS $A$ increases its power, according to the max-SINR rule,
it would attract even more users. Thus, the overall process may
exacerbate load imbalance.  This is in contrast to the pricing based
BS association, which would actually reduce the number of users served
by BS $A$ (due to the higher pricing term), hence avoiding the undesirable
phenomenon of overloading at BS $A$.

\begin{figure}[t]
\centering
\centerline{\includegraphics[width=8cm]{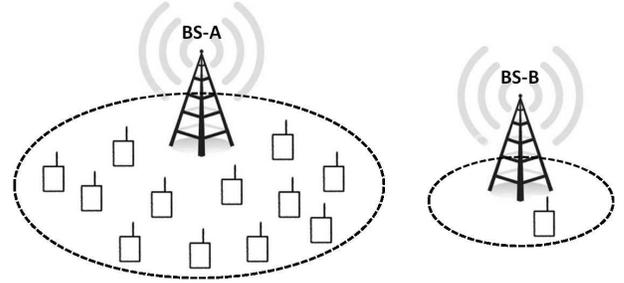}}
\caption{\label{fig:bad_iteration}An initially unbalanced BS association}
\end{figure}

\subsection{Direct Dual Optimization for Joint BS Association and Power Control}

\setcounter{equation}{19}
\begin{figure*}[b]
\hrulefill
\begin{equation}
R^{(t)}_{ij} = W\log\det\left(\mathbf{I}_{N_i\times{N_i}}+\mathbf{H}_{ij}\mathbf{v}_{ij}\mathbf{v}^H_{ij}\mathbf{H}^H_{ij}\left(\sum_{(i',j')\neq(i,j)}\mathbf{H}_{ij'}\mathbf{v}_{i'j'}\mathbf{v}^H_{i'j'}\mathbf{H}^H_{ij'}+\sigma_i^2\mathbf{I}_{N_i\times{N_i}}\right)^{-1}\right).
\label{rate_mimo}
\end{equation}
\end{figure*}
\setcounter{equation}{16}

The iterative BS association and power control method proposed in the
previous section is simple and effective. To further quantify its
performance, this section pursues an alternative direct dual
optimization approach for solving the joint BS association and power
control problem (\ref{power_problem}). The algorithm proposed in this
section is much more computationally complex than the iterative
approach of the previous section, but it serves as a benchmark for
performance comparison purpose.

The idea of direct dual optimization is to write down the Lagrangian dual of (\ref{power_problem})
\begin{multline}
\label{Lagrangian_Power}
L(\mathbf{X},\mathbf{k},\mathbf{p},\bm{\mu},\nu)= \sum_{i,j}x_{ij}\log\left(R_{ij}(\mathbf{p},k_j)\right)\\
-\sum_j{\mu_j\left(\sum_{i}x_{ij}-k_j \right)}-\nu\left(\sum_{j} k_j-K\right),
\end{multline}
which gives
\begin{equation}
g(\bm{\mu},\nu)  = \left\{
\begin{array}{cl}
\mathop\text{max}\limits_{\mathbf{X},\mathbf{k},\mathbf{p}} & L(\mathbf{X},\mathbf{k},\mathbf{p},\bm{\mu},\nu) \\
{\rm s.t.\ } & 0 \leq p_j \leq \overline{p}_j, \; \forall j \\
& \sum_{j} x_{ij}=1, \; \forall i \\
&x_{ij} \in \{0,1\}
\end{array}
\right. .
\label{dual_Power}
\end{equation}
The above maximization problem is now over both the power $\mathbf{p}$
and the BS association variables $\mathbf{X}$ and $\mathbf{k}$ under
{\it fixed} dual variables. As before, the optimal solution for $k_j$
can still be obtained analytically by (\ref{K*}), i.e.,
$k_j^*=e^{\mu_j-\nu-1}$. However, the optimization over $\mathbf{X}$
and $\mathbf{p}$ is considerably more difficult because of the
nonconvex and discrete nature of the problem. Here, we propose an approach
of iteratively optimizing $\mathbf{X}$ assuming fixed $\mathbf{p}$
using (\ref{x*}), then optimizing $\mathbf{p}$ under fixed
$\mathbf{X}$.
Clearly, the solution so obtained may not be the global optimum.
Thus, we choose to start from multiple initial points of
$(\mathbf{X},\mathbf{p})$ in order to better approach the global
optimum.

For the minimization of the dual function $g(\cdot)$, it is possible
to pursue a subgradient or dual coordinate descent approach.
The key is to recognize that the subgradient in (\ref{subgradient})
is still valid; further the optimization of $\nu$ can still be done
via (\ref{nu*}). However, the dual coordinate descent step
(\ref{mu_update}) no longer applies in a straightforward fashion.
Instead, to implement coordinate descent, a bisection method on $\mu_j$ can
be done in order to find the optimal $\mu_j^*$, while holding other
dual variables fixed.
Bisection can be carried out based on the subderivative of $g(\cdot)$
with respect to $\mu_j$, which can be calculated as
$\left(e^{\mu_j-\nu-1}-\sum_{i}x_{ij}\right)$, where $x_{ij}$ is the solution to
(\ref{dual_Power}).
Under an ideal assumption that the true optimal solution
$\left(\mathbf{X}^*,\mathbf{k}^*\right)$ can be found when evaluating
$g(\cdot)$, 
we can further deduce that this dual method has the same performance bound as in Proposition 1.
Finally as mentioned before, to ensure the near global optimum evaluation of $g(\cdot)$, multiple
random starting points need to be tried. This gives a way to find near
globally optimal solution to the overall problem.

The direct dual optimization method described above has
much higher complexity than the proposed iterative BS association and
power control method proposed in the previous section, but given
enough starting points, it can be served as a benchmark for the
proposed algorithm. The numerical simulation carried out later in the
paper indicates, however, that the simpler iterative BS association and power
control method proposed earlier already performs very close to the benchmark.


\section{Joint BS Association and Beamforming in MIMO Networks}

We now further extend the BS association problem to the case where
both the BSs and the users are equipped with multiple antennas, and
multiple users are spatially multiplexed within each cell. The use
of beamforming can significantly influence the overall effective
channel gain, and consequently the optimal BS association for each
user. Thus, the joint BS association and beamforming problem is highly
nontrivial.  Note that power control is implicitly included as part of
beamforming here.

This section first reviews the state-of-the-art in this area, then
proposes a novel approach of decoupling the overall problem into two
subproblems where the BS association and the beamformers are optimized
separately. The proposed approach has lower computational complexity;
it does not require frequent BS handover; it has comparable performance
to the best benchmark joint optimization algorithm in the literature.

\subsection{Problem Formulation and Existing Approach}

Consider a downlink MIMO cellular network with $M_j$ antennas at
BS $j$ and $N_i$ antennas at user $i$.
The channel between user $i$ and BS $j$ is denoted by matrix
$\mathbf{H}_{ij}\in\mathbb{C}^{N_i \times M_j}$. We assume one data
stream per user, and up to $M_j$ users being spatially multiplexed at
the same time. The channel is assumed to be flat-fading.  Each BS is
assumed to have a fixed total power constraint.

Because the scheduling operation, as well as transmit and receive
beamformers, are designed to adapt to the channel realizations of
each user, we can no longer claim that the proportionally fair
scheduling would result in equal time/frequency allocation among all
the users.  Instead, proportionally fair scheduling over time needs to
be included explicitly in the problem formulation.  Toward this end,
let the BS association $x_{ij}$ be fixed over time. Let
$\mathbf{v}_{ij}^{(t)}\in
\mathbb{C}^{M_j}$ be the transmit vector of BS $j$ intended for user
$i$ at time $t$. In order to maximize the network
utility defined as the log of the long-term average rates of all users, i.e.,
$\sum_{i} \log \left(R_{i}^{\rm avg}\right)$,
we can equivalently maximize a weighted rate sum over successive time slots:
\begin{subequations}
\label{prob:bsa_bf}
\begin{eqnarray}
\mathop\text{maximize}\limits_{\mathbf{X},\mathbf{V}^{(t)}} & &
\sum_{i} \omega^{(t)}_{i} \sum_j x_{ij} R^{(t)}_{ij}
\label{obj_MIMO}\\
\text{subject to}
&&\sum_{i}\|\mathbf{v}_{ij}^{(t)}\|^2 \leq \overline{p}_j, \; \forall j
\label{power_constraint_MIMO}\\
&&\sum_{j}x_{ij}=1, \; \forall i
\label{association_constraint_MIMO}\\
&&x_{ij} \in \{0,1\}, \; \forall i, \forall j
\label{bs_constraint_MIMO}
\end{eqnarray}
\end{subequations}
where $\mathbf{V}^{(t)}=\left[\mathbf{v}^{(t)}_{ij}\right]$, the weight
$\omega_{i}^{(t)}$ equals the reciprocal of each user's long-term
average rate at time $t$, and $R^{(t)}_{ij}$ is the instantaneous rate
of user $i$ at time $t$ if it is associated with BS $j$ as
expressed in (\ref{rate_mimo}) at the bottom of this page
(for ease of notation, time index  $t$ is omitted).

Note that user scheduling within each BS is implicit in the problem formulation
(\ref{prob:bsa_bf}); further in (\ref{power_constraint_MIMO}), $\overline{p}_j$
is the peak PSD constraint of BS $j$, and the constraint
(\ref{association_constraint_MIMO}) enforces the rule that each user is
associated with only one BS. Since $x_{ij}$ is not allowed to depend on $t$, for each
optimization period with a fixed set of channels, BS handovers
from time to time are not permitted.

The beamforming design problem for weighted rate-sum maximization is a
difficult nonconvex problem, even when the BS association is fixed.
Below, we briefly review a WMMSE approach for solving this problem for
fixed BS assignment, and a generalization of the WMMSE algorithm
in \cite{wmmse_luo} that accounts for BS association.

\subsubsection{Beamforming via WMMSE with Fixed BS Association}

When user-BS association is fixed in (\ref{prob:bsa_bf}), the problem
reduces to a beamforming design problem with a weighted rate sum
maximization objective. As proposed in \cite{christensen} and
\cite{wmmse}, this beamforming problem can be solved by solving an
equivalent weighted minimum mean-square error (WMMSE) problem.
We refer to \cite{wmmse} for the detailed description of the WMMSE
algorithm.

\subsubsection{WMMSE Method for BS Association}

The recent work \cite{wmmse_luo} further incorporates BS association into
the beamforming problem by imposing a penalty term to the weighted rate-sum
objective and by solving the resulting penalized WMMSE problem for each time
instant $t$.  Basically, the users are penalized for being associated with more
than one BS, and accordingly constraint (\ref{association_constraint_MIMO}) is
guaranteed in the end. However, this approach does not guarantee that the
user-BS association is fixed over time. Consequently, as weights $\omega_i$ are
updated over time, user association and user scheduling can both change. This
results in rapid BS handovers, which are not desirable in practice.


Further, the WMMSE-based BS
association method as proposed in \cite{wmmse_luo} has
high computational complexity, because the WMMSE update needs to be done
between every single BS-user pair in the entire network. Also, the
performance and convergence speed of the algorithm depend heavily on the
parameter of the penalty term, which can only be set heuristically. Nevertheless, the
method of \cite{wmmse_luo} provides a useful benchmark for our proposed
algorithm below.

\subsection{Proposed Two-Stage BS Association and Beamforming}

This paper formulates the joint BS association and beamforming problem in
recognition of the fact that BS association typically takes place at a much
larger time scale and should only adapt to the slow-fading channel
characteristics, while beamforming and scheduling can take place in faster time
scale.  Thus, instead of jointly optimizing BS association and beamforming at
each time slot, it is more sensible to decouple them in two stages. The first
stage solves the BS association problem, while the second stage solves the
beamforming problem assuming fixed BS association. The proposed two-stage
algorithm is described below:

\subsubsection{BS Association Stage}

The idea is to determine BS association in the first stage based on
an \emph{estimate} of channel quality. For BS association purposes, we
rely on a simple SISO representation of the MIMO channel,
and apply the joint coordinate descent and power control
algorithm presented in the previous section to determine the BS
association for each user.

The SISO representation for the MIMO channel is based on the fact that from a
degree of freedom point of view, $M_j$ antennas at the BS provide $M_j$ spatial
multiplex gain. 
Thus, we can think of a MIMO system with $M_j$ antennas over
bandwidth $W$ as equivalently a SISO system with bandwidth $M_j W$.
More precisely, let
$|h_{ij}|$ be the average channel magnitude between BS $j$ and user $i$
(modeling the distance-dependent attenuation and
shadowing). We estimate each user's SINR according to (\ref{sinr}),
while accounting for the multiple $M_j$ antennas at the BS by
redefining parameter $a_{ij}$ as
\setcounter{equation}{20}
\begin{equation}
\label{new_a_1}
a_{ij} =
\log\left(M_jW\log\left(1+\frac{\text{SINR}_{ij}}{\Gamma}\right)\right).
\end{equation}
The joint BS association and power control algorithm can now be applied to
determine the BS association.  We remark here that only the BS association is
of interest at this stage. The optimized power $p_j$ serves to assist the BS
association and scheduling decisions, but is further optimized in the
next stage.

\subsubsection{Scheduling and Beamforming Stage}

After the BS association is determined, the overall problem now reduces to the
beamforming vector design problem, which can be solved using the WMMSE
algorithm. Our contribution in algorithm design in this stage is to point out
that one can further lower the computational complexity of WMMSE by eliminate candidate
users that are unlikely to be scheduled.

In the conventional WMMSE algorithm, all potential users within a cell can
have their beamforming vectors updated in each step.  However, because each BS
$j$ can spatially multiplex at most $M_j$ users, to reduce the computational
complexity, we may choose a subset of users who are most likely to be served to
take part in the WMMSE algorithm.  The simplest way to do this is to choose the
users according to the estimated weighted rate $\omega_{i} x_{ij} \tilde{R}_{ij}$,
where $\tilde{R}_{ij}$ is calculated by the SISO model (\ref{rate_BS}) scaled
by $M_j$ according to the resulting $\left(\mathbf{X},\mathbf{p}\right)$ after
stage one.  More sophisticated scheduling can also take channel directions into
account.  The number of potential users chosen by the WMMSE scheduler in cell
$j$ is a parameter, called $S_j$ in this paper, which should be greater than
$M_j$. 
A complete description of the two-stage method is stated in Algorithm 3.

\begin{algorithm}[t]
\caption{Two-Stage Joint BS Association and WMMSE Beamforming}
\label{alg:Two-stage Method}
\begin{algorithmic}
\STATE \textbf{Initialization: }Choose $S_j\geq M_j$, $\forall j$.\\
1) Run Algorithm 2, the joint BS association and power control (with $a_{ij}$
calculated by (\ref{new_a_1})) until convergence.  Let the result of the
optimization be $(\mathbf{X},\mathbf{p})$.  Associate users to BSs according
to $\mathbf{X}$. Compute $\tilde{R}_{ij}$ according to
$(\mathbf{X},\mathbf{p})$ using the SISO model (\ref{rate_BS}) scaled by
$M_j$.\\
\REPEAT
\STATE
2) Choose $S_j$ potential users among the users associated with BS $j$
according to $\omega_{i} x_{ij} \tilde{R}_{ij}$, $\forall j$.\\
3) Run the WMMSE algorithm \cite{wmmse} for the chosen users in
each cell to get the transmit beamformers and the resulting rate $R^{(t)}_{ij}$.\\
4) Update the average rate for each user $R^{\text{avg}}_{i}$ based on
$R^{(t)}_{ij}$; set $\omega_{i} =  1/R^{\text{avg}}_{i}$, $\forall i$.
\\ \UNTIL $R^{\text{avg}}_{i}$ converges, $\forall i$. \\
\end{algorithmic}
\end{algorithm}

\subsection{Complexity Analysis}
This subsection briefly analyzes the computational complexity saving of the
proposed algorithm as compared of the joint BS association and beamforming
algorithm of \cite{wmmse_luo}.  For simplicity, we assume that the number of
antennas at all the BSs are the same and the number of antennas at all the
users are the same, i.e., $M_j=M$ for all $j$'s, and $N_i=N$ for all $i$'s.
Under fixed BS association, the conventional WMMSE algorithm has a complexity
of ${O}\left(K^2MN^2+K^2M^2N+KM^3+KN^3\right)$ per each beamforming step,
where $K$ is the number of users in the entire network.

For the joint BS association and WMMSE method of \cite{wmmse_luo}, since the
WMMSE update of each user needs to be done with respect to all $L$ BSs in the
network, parameter $K$ in the WMMSE complexity formula for the algorithm of
\cite{wmmse_luo} needs to be increased by a factor of $L$, resulting in a
complexity of ${O}\left(L^2K^2MN^2+L^2K^2M^2N+LKM^3+LKN^3\right)$.

By contrast, in the proposed two-stage algorithm, only $S = \sum_jS_j$ users
are considered, and they are already associated with their respective BSs.
Consequently, the complexity per each WMMSE iteration is reduced to
of ${O}\left(S^2MN^2+S^2M^2N+SM^3+SN^3\right)$. Since $S \ll K \ll LK$, this is
significant complexity saving.  In the above calculation, we ignore the
complexity of the first stage, which is typically very fast. In addition, we
do not account for the number of iterations in the WMMSE algorithm. However,
the number of WMMSE iterations is typically smaller for the proposed algorithm
than for the WMMSE algorithm of \cite{wmmse_luo}
since fewer users are involved. Overall, the proposed two-stage algorithm is
much faster than the WMMSE algorithm of \cite{wmmse_luo}. The simulation results of next section show that it performs
almost as well.


\section{Simulation Results}

\begin{table}[!t]
\footnotesize
\renewcommand{\arraystretch}{1.3}
\label{simulation_parameters}
\caption{Simulation Parameters}
\centering
\begin{tabular}{|c|c|}
\hline
Channel Bandwidth & 10 MHz \\
\hline
Frequency Reuse Factor & 1 \\
\hline
Duplex Mode & TDD \\
\hline
Macro BS Max PSD & -27 dBm/Hz \\
\hline
Pico BS Max PSD & -47 dBm/Hz \\
\hline
Antenna Gain & 15 dBi  \\
\hline
SNR Gap & 0 dB \\
\hline
Background Noise PSD & -169 dBm/Hz \\
\hline
Distance-dependent Attenuation & $128.1+37.6\log_{10}(d)$, $d$ is in km \\
\hline
Shadowing & Log normal as $\mathcal{N}(0,\sigma^2)$, $\sigma=8$dB \\
\hline
\end{tabular}
\end{table}

\subsection{BS Association Under Fixed Powers}

We first simulate the BS association algorithms with fixed powers
in a downlink SISO network with a 7-cell wrap around topology, with
one macro-BS and three pico-BSs per cell, and with 30 users per cell.
The channel modeling parameters are as defined in Table I.
The transmit PSD level is fixed at the maximum value for each BS.
Fig.~\ref{fixed_duality} compares the convergence behavior of the dual
coordinate descent (Algorithm 1) with that of the adaptive
subgradient method. Here each iteration refers to either a single
update of $\mu_j$ in the DCD method or a subgradient update of all
$\mu_j$'s. We see that the DCD method converges to within $10^{-1}$ of
the optimum with only two rounds of iterations per BS (i.e. 56
iterations), while the convergence of subgradient method is very
sensitive to its parameters.  Here, we set $\rho=1.2$, $\beta=0.9$,
and $\delta=0.002$ in the adaptive subgradient method
\cite{andrews, bertsekas_convex} and see that different settings of $\delta_1$
and $\gamma_k$ can result in very different convergence behaviors.
Note that in Fig. \ref{fixed_duality} the DCD method does not converge to
the optimum. This is due to the fact that it is possible for
coordinate descent to get stuck in a suboptimal point. This gap is
quite small in this simulation, however.

Fig.~\ref{fixed_cdf} shows the cumulative distribution function (CDF) of data
rates after 56 iterations for the various BS assignment algorithms. We see
that both the subgradient method and the DCD method offer substantial rate
improvement to low-rate users as compared to the max-SINR BS assignment rule.
For instance, the 50th-percentile rate is increased by about 33\%, which is a
consequence of off-loading traffic from the macro BSs to the pico BSs.  The
performance of the subgradient method is again parameter dependent.

Table II shows that the numerical utility\footnote{The numerical value
of the utility is computed as sum of log of user rates,
where rates are in Mbps.} achieved by DCD and two
of the subgradient methods are almost identical, while subgradient-2
and the max-SINR method produce quite inferior results.
This is consistent with the earlier convergence plot (Fig.~\ref{fixed_duality})
and the CDF plot (Fig.~\ref{fixed_cdf}).
In addition, the duality-gap bound calculated according to Proposition 1
for this example is about 0.45. This shows that the performance of the DCD
algorithm is already very close to the global optimum. Finally,
Fig.~\ref{fixed_num_user} displays the percentages of macro/pico users for
various BS association methods. It shows that with the max-SINR BS association
and subgradient-2, too many users are associated with the macro BS, while the
DCD algorithm is able to achieve more balanced load by off-loading the users to
pico BSs .


\begin{figure}[t]
\begin{minipage}[b]{1.0\linewidth}
\centering
\centerline{\includegraphics[width=8cm]{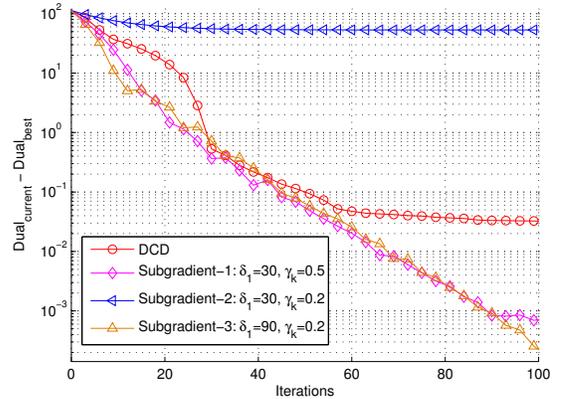}}
\caption{Convergence behaviors of dual coordinate descent and subgradient algorithms}
\label{fixed_duality}
\end{minipage}
\end{figure}
\begin{figure}[t]
\begin{minipage}[b]{1.0\linewidth}
\centering
\centerline{\includegraphics[width=8cm]{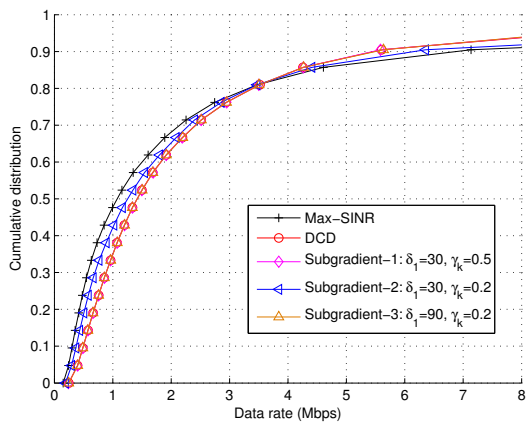}}
\caption{CDF of user rates for various BS association methods after 56 iterations}
\label{fixed_cdf}
\end{minipage}
\end{figure}

\begin{figure}[!t]
\begin{minipage}[b]{1.0\linewidth}
\centering
\centerline{\includegraphics[width=8cm]{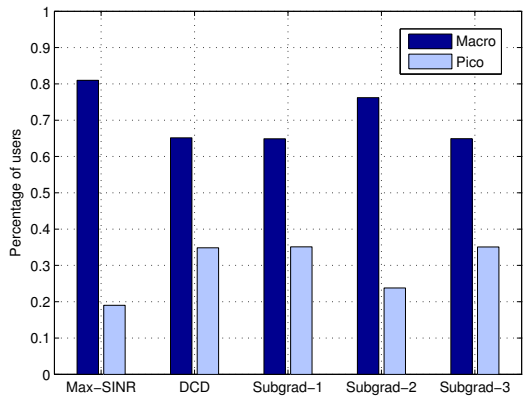}}
\caption{The percentages of macro/pico users for various BS association methods after 56 iterations}
\label{fixed_num_user}
\end{minipage}
\end{figure}

\begin{table*}[!t]
\small
\label{tab_utility_BSA_flat}
\caption{Utility values for various BS association
methods after 56 iterations}
\centering
\begin{tabular}{p{1.5cm}<{\centering}|p{2.6cm}<{\centering}|p{2.6cm}<{\centering}|
p{2.6cm}<{\centering}|p{2.6cm}<{\centering}|p{2.6cm}<{\centering}}
\hline
 & Max-SINR & DCD & Subgradient-1 & Subgradient-2 & Subgradient-3\\
\hline
Utility & 52.86 & 97.63 & 97.58 & 75.04 & 97.66 \\
\hline
\end{tabular}
\end{table*}

\subsection{Joint BS Association and Power Control}

\begin{figure}[t]
\begin{minipage}[b]{1.0\linewidth}
\centering
\centerline{\includegraphics[width=8cm]{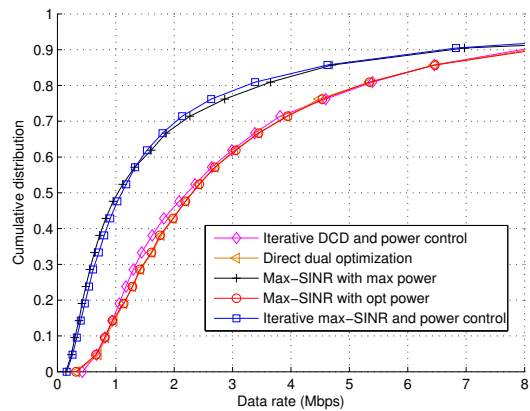}}
\caption{CDF of user rates for various joint BS association and power control methods}
\label{pc_cdf}
\end{minipage}
\end{figure}
\begin{figure}[t]
\begin{minipage}[b]{1.0\linewidth}
\centering
\centerline{\includegraphics[width=8cm]{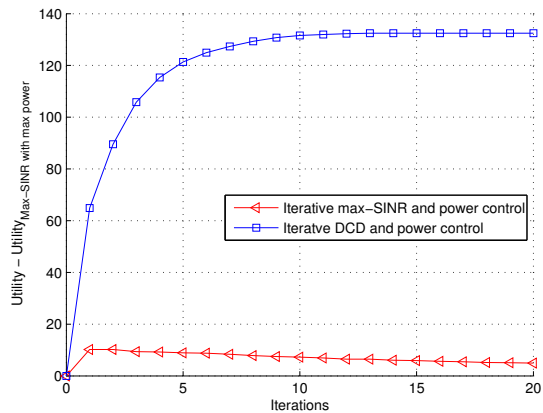}}
\caption{The convergence of the iteration with power control for DCD and max-SINR}
\label{pc_convergence}
\end{minipage}
\end{figure}

\begin{figure}[t]
\centering
\centerline{\includegraphics[width=9.5cm]{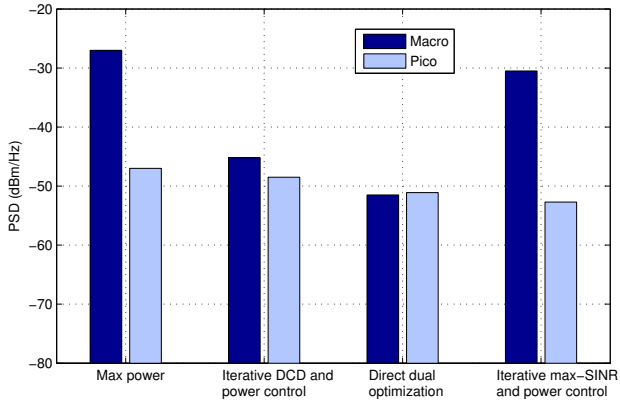}}
\caption{Optimized powers for various joint BS association and power control methods}
\label{pc_power}
\end{figure}

\begin{figure}[t]
\centering
\centerline{\includegraphics[width=9.5cm]{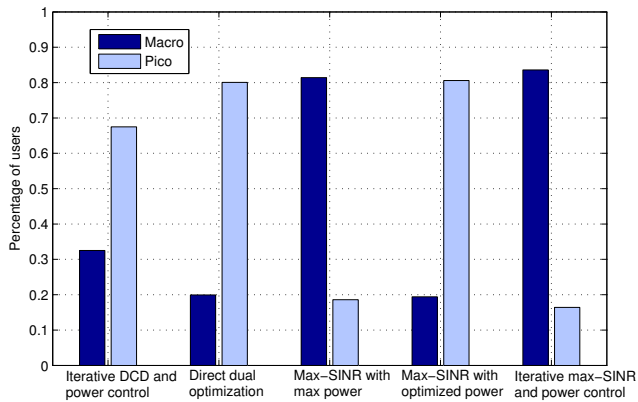}}
\caption{The percentages of maco/pico users for various joint BS association and power control methods}
\label{pc_num_user}
\end{figure}

\begin{table*}[t]
\small
\renewcommand{\arraystretch}{1.3}
\label{flat_BSA_utility}
\caption{Utility values for various joint BS association and power control methods}
\centering
\begin{tabular}{p{1.5cm}<{\centering}|p{2.6cm}<{\centering}|p{2.6cm}<{\centering}|p{2.6cm}<{\centering}
|p{2.6cm}<{\centering}|p{2.6cm}<{\centering}}
\hline
& Iterative DCD and power control & Direct dual optimization & Max-SINR with max power & Max-SINR with optimized power & Iterative max-SINR and power control \\
\hline
Utility & 186.29 & 194.41 & 52.86 & 193.01 & 56.09\\
\hline
\end{tabular}
\end{table*}

This section considers the same network topology, but with downlink power control
implemented in addition. We use Newton's method
for power control for utility maximization.
Note that since the network utility maximization problem is nonconvex, only the convergence to local optimum is expected.
For the implementation of the
direct dual optimization approach, we choose 10 random starting points.

In Fig.~\ref{pc_cdf}, we observe a significant difference between
max-SINR BS association and DCD-based BS association when they are implemented
iteratively with power control. Further, Fig.~\ref{pc_convergence} shows that
the iteration between DCD and power control gives incremental improvement in
utility, while in the max-SINR case utility actually decreases
after the second iteration. These two plots validate the earlier analysis
showing that the max-SINR association does not address the load balancing issue
effectively and that the use of utility-maximization-based BS association is
crucial when implemented with power control.

As can be seen in Fig.~\ref{pc_cdf} and Table III, the direct dual optimization
approach is able to provide the best performance among all the methods, but at
the cost of very high complexity. In the simulation, we observe that during the
updating of one single dual variable, direct dual optimization needs to call
the power control algorithm approximately 1000 times, while the iterative DCD
and power control method only needs to run the power control method once in
each iteration.

For comparison purpose, we also implement the max-SINR BS association under
the powers optimized by the duality-based approach.  Now, max-SINR performs
well as seen in the Fig.~\ref{pc_cdf} and Table III.  This shows that the
problem with the max-SINR algorithm is that it is unable to induce the correct
power setting, in contrast to the DCD scheme.

In Fig.~\ref{pc_power}, we show the PSD levels produced by the various methods.
It is observed that the methods with better performance are able to suppress
the overly high transmit power by the macro BSs. Further,
Fig.~\ref{pc_num_user} shows the percentages of users associated with the macro
and pico BSs resulting from
various methods. Methods with better performance tend to have higher
percentages of pico users, which illustrates the benefit of off-loading traffic
from macro BSs to pico BSs.
Combining results from Fig.~\ref{pc_power} and Fig.~\ref{pc_num_user}, we
conclude that a combination of suppressing macro BS power for interference
mitigation and off-loading to pico BSs for load balancing is the key to
obtaining overall good system performance.


\subsection{Joint BS Association and Beamforming}

\begin{figure}[!t]
\begin{minipage}[b]{1.0\linewidth}
  \centering
	  \centerline{\includegraphics[width=8cm]{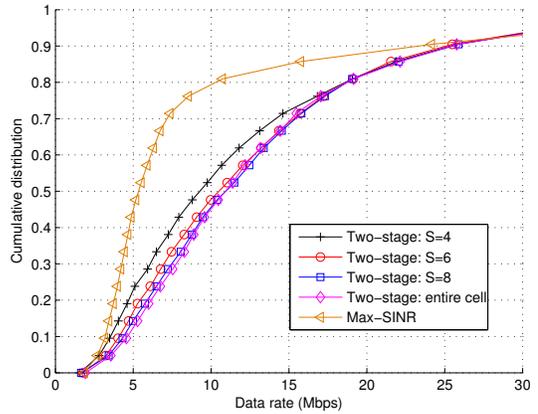}}
  \caption{CDF of user rates for joint BS association and beamforming:
  two-stage method vs. max-SINR for a network with 7 macro BSs and 21 pico BSs}
  \label{mimo_cdf}
\end{minipage}
\end{figure}
\begin{figure}
\begin{minipage}[b]{1.0\linewidth}
  \centering
	  \centerline{\includegraphics[width=8cm]{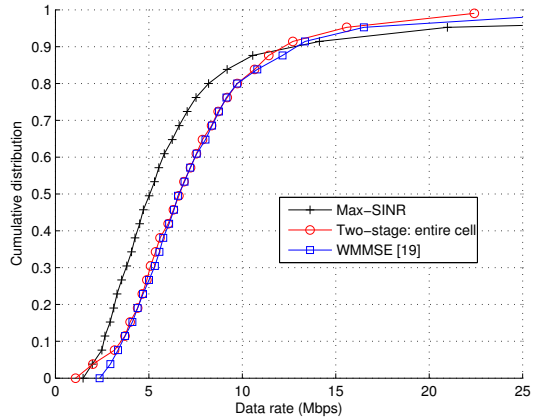}}
  \caption{CDF of user rates for joint BS association and beamforming:
  two-stage method vs. WMMSE [19] for a network with 3 macro BSs and 4 pico BSs}
  \label{mini_cdf}
\end{minipage}
%
\end{figure}

\begin{table*}[!t]
\small
\renewcommand{\arraystretch}{1.3}
\label{flat_BSA_utility}
\caption{Utility values for various joint BS association and
beamforming methods}
\centering
\begin{tabular}{p{1.5cm}<{\centering}|p{2.6cm}<{\centering}|p{2.6cm}<{\centering}|p{2.6cm}<{\centering}
|p{2.8cm}<{\centering}|p{2.6cm}<{\centering}}
\hline
 & Two-stage: $S_j=4$ & Two-stage: $S_j=6$ & Two-stage: $S_j=8$ & Two-stage: entire cell & Max-SINR \\
\hline
Utility & 468.2 & 488.05 & 495.16 & 498.62 & 392.96 \\
\hline
\end{tabular}
\end{table*}


Consider again the same network topology, but for the MIMO case with 4
antennas at each of the macro and pico BSs and 2 antennas at each user.
The two-stage BS association and WMMSE algorithm is compared with the
max-SINR BS association under maximum power plus per-cell WMMSE, in
Fig.~\ref{mimo_cdf} and in Table IV. The number of candidate users (the $S_j$
parameter) in the two-stage method is chosen to be 4, 6 and 8.
It is observed that the two-stage method can substantially
improve the max-SINR BS association:
the 50th-percentile rate is almost doubled when $S=8$. We
also observe that the performance of the two-stage method improves
with larger $S$, but the improvement beyond $S=8$ is marginal for this
case with 4 transmit antennas.

We also wish to compare the two-stage method with the joint BS association and
WMMSE method proposed in \cite{wmmse_luo}. Because this method involves
implementing the WMMSE algorithm across the entire network, its complexity is
very high. In fact, running such an algorithm across a 7-cell network (with 28
BSs) is already impractical.  Instead, Fig.~\ref{mini_cdf} compares the two
algorithms in a smaller network with 3 macro BSs, 4 pico BSs, and with 105 user
terminals.  We observe in our simulation that the utility gains by the
two-stage method and the WMMSE method of \cite{wmmse_luo} are 17.82 and 23.05
respectively as compared to the max-SINR scheme.  Although
the WMMSE method of \cite{wmmse_luo} produces overall better network utility,
we observe from Fig.~\ref{mini_cdf} that the
majority of users do not see much performance
difference between the two.  In addition, we observe in the simulation that the
joint BS association and WMMSE method of \cite{wmmse_luo} causes approximately 24 BS association
switchings on average for each beamforming update. About 1/4 of the users
are involved in BS handover in each time slot, which is not very practical.
In contrast, BS association is completely fixed in the two-stage method,
which is a clear advantage.

\section{Conclusion}

This paper considers pricing-based BS association schemes for
heterogeneous networks and proposes a distributed price update
strategy based on a coordinate descent algorithm in the dual domain.
The proposed BS association scheme can be seamlessly incorporated with
power control and beamforming.
In each of these cases, because BS assignment must be determined at a
relatively larger time scale, we propose to implement BS association
with respect to the expected average channel gains.  The overall main
insight of this paper is that load balancing is crucial in
heterogeneous networks. Instead of assigning BSs according to SINR, a
utility maximization and pricing strategy can be adopted in order to
achieve balanced loads across the network, and pricing update can be
done in a distributed fashion efficiently.

\appendices

\section{Proof of Proposition 1}

Let $(\bm{\mu},\nu)$ be the optimized dual variables at convergence
of the DCD algorithm. Let $(\mathbf{X},\mathbf{k})$ be the primal
solution recovered from the dual variable $(\bm{\mu},\nu)$ using
(\ref{x*}) with tie-breaking if necessary, and subsequently setting
$k_j = \sum_i x_{ij}$ as the number of users associated with each BS.
Let $\mathbf{R}$ be the corresponding user rates calculated by
(\ref{rate_BS}).  We have:
\allowdisplaybreaks
\begin{subequations}
\begin{eqnarray}
f_{\text{o}}(\mathbf{X},\mathbf{R})
&=&\sum_{i,j}a_{ij}x_{ij}-\sum_{j}k_j\log(k_j) \\
&=&\sum_{i,j}a_{ij}x_{ij}-\sum_{j}k_j\log\left(e^{\mu_j-\nu-1}\right)\notag\\
&&-\sum_j k_j\log\left(\frac{k_j}{e^{\mu_j-\nu-1}}\right)\\\
&=&\sum_i(a_{ij}-{\mu}_j){x}_{ij}+\sum_{j}{k}_j\notag\\
&& +\sum_{j}{\nu}{{k}_j}-\sum_j{k}_j\log\left(\frac{{k}_j}{e^{{\mu}_j-{\nu}-1}}\right)\\
&=&\sum_i\max_{j}(a_{ij}-{\mu}_j)+K+{\nu}{K}\notag\\
&&-\sum_j{k}_j\log\left(\frac{{k}_j}{e^{{\mu}_j-{\nu}-1}}\right)
\label{duality_gap1}\\
&=&\sum_i\max_{j}(a_{ij}-{\mu}_j)+\sum_{j}e^{{\mu}_j-{\nu}-1}\notag\\
&&+{\nu}{K}-\sum_j{k}_j\log\left(\frac{{k}_j}{e^{{\mu}_j-{\nu}-1}}\right)
\label{duality_gap2}\\
&=&g({\bm{\mu}},{\mathbf{\nu}})-\sum_j{k}_j \log\left(\frac{{k}_j}{e^{{\mu}_j-{\nu}-1}}\right)
\label{err_bound}
\end{eqnarray}
\end{subequations}
where the optimality condition on $x_{ij}$, (\ref{x*}), is used in
deriving (\ref{duality_gap1}), and the optimality condition on $\nu$,
(\ref{nu*}), is used in deriving (\ref{duality_gap2}).

Now, let $(\mathbf{X}^*,\mathbf{k}^*)$ be the optimal solution to
problem (\ref{bsa_problem}), and let $\mathbf{R}^*$ be the resulting
user rates. By weak duality, it always holds that $g({\bm{\mu}},{\nu})
\geq f_{\text{o}}(\mathbf{X}^*,\mathbf{R}^*)$. Combining this result
with (\ref{err_bound}), we prove the claim
\begin{equation}
f_{\text{o}}({\mathbf{X}},{\mathbf{R}}) \geq
f_{\text{o}}(\mathbf{X}^*,\mathbf{R}^*)-\sum_j {k}_j
\log\left(\frac{{k}_j}{e^{{\mu}_j-{\nu}-1}}\right).
\end{equation}

\section{Newton's method for downlink power control}
In this appendix, we describe a Newton's method for solving
the power optimization problem for maximizing the network
log utility. Assuming fixed user association $\mathbf{X}$ (and accordingly
$k_j=\sum_ix_{ij}$), the optimization problem is:
\begin{subequations}
\label{prob:power_prob}
\begin{eqnarray}
\mathop\text{maximize}\limits_{\mathbf{p}} & &
\sum_{i,j}x_{ij}\log\left(R_{ij}(\mathbf{p},k_j)\right)
\label{obj_power_prob} \\
\text{subject to}&&0 \leq p_j \leq \overline{p}_j, \; \forall j
\end{eqnarray}
\end{subequations}
Let $f_{\text{power}}(\mathbf{p})$ denote
the objective function above. Introduce parameter $r_{ij}$ as
\begin{equation}
r_{ij}=\log\left(1+\frac{\text{SINR}_{ij}}{\Gamma}\right).
\end{equation}
We can write the first-order and the second-order partial derivatives of $f_{\text{power}}(\mathbf{p})$ with respect to $p_j$ as:
\begin{multline}
\label{first_derivative}
\frac{ \partial f_{\text{power}} }{\partial p_j}
= \sum_{i} \frac{\text{SINR}_{ij}}{r_{ij}(\Gamma+\text{SINR}_{ij})}\frac{x_{ij}}{p_j}\\
- \sum_{i}\sum_{j'\neq j}\frac{|h_{ij}|^2\text{SINR}_{ij'}^2}{|h_{ij'}|^2r_{ij'}(\Gamma+\text{SINR}_{ij'})}\frac{x_{ij'}}{p_{j'}}
\end{multline}
and
\begin{multline}
\label{second_derivative}
\frac{ \partial^2 f_{\text{power}}} {\partial p_j^2}
= -\sum_{i} \left(\frac{1}{r_{ij}^2}+\frac{1}{r_{ij}}\right)\frac{\text{SINR}_{ij}^2}{(\Gamma+\text{SINR}_{ij})^2}\frac{x_{ij}}{p_j^2}\\
+
\sum_{i}\sum_{j'\neq j}
\frac{|h_{ij}|^4\text{SINR}_{ij'}^3\left(2r_{ij'}\Gamma+
\text{SINR}_{ij'}(r_{ij'}-1)\right)}{|h_{ij'}|^4r_{ij'}^2(\Gamma+\text{SINR}_{ij'})^2}\frac{x_{ij'}}{p_{j'}^2}.
\end{multline}
Following the heuristic in \cite{power}, we only use the diagonal
entries of Hessian matrix in Newton's method in order to reduce the
computational complexity of inverting the Hessian. In this case, the Newton step becomes
$\Delta p_j = -\frac{ \partial f_{\text{power}} }{\partial p_j}/
\frac{ \partial^2 f_{\text{power}}} {\partial p_j^2}$. To ensure
an  incremental updating direction, we further modify the Newton step as
\begin{equation}
\label{newton_step}
\Delta p_j = \frac{ \partial f_{\text{power}} }{\partial p_j} \left/
\left|\frac{ \partial^2 f_{\text{power}}} {\partial p_j^2} \right| \right. .
\end{equation}
The overall algorithm updates all $p_j$'s through
\begin{equation}
\label{newton_update}
p_j^{(t+1)} = \left[p_j^{(t)} + \alpha_{\text{nt}}\Delta
p_j\right]_0^{\overline{p}_j},
\end{equation}
where $\alpha_{\text{nt}}$ is the step size, which can be determined
by backtracking line search \cite{boyd_convex}.

\bibliographystyle{IEEEbib}
\bibliography{IEEEabrv,references}

\begin{IEEEbiography}
[{\includegraphics[width=1in,height=1.25in,clip,keepaspectratio]{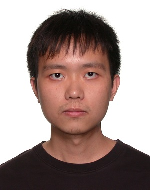}}]
{Kaiming Shen}
(S'13) received the B.Eng. degree in Information Security and the B.S. degree in Mathematics
from Shanghai Jiao Tong University, Shanghai, China in 2011 and the M.A.Sc. degree in Electrical
and Computer Engineering from the University of Toronto, Ontario, Canada in 2013. His
research interests include wireless communications and mathematical optimization.
\end{IEEEbiography}

\begin{IEEEbiography}
[{\includegraphics[width=1in,height=1.25in,clip,keepaspectratio]{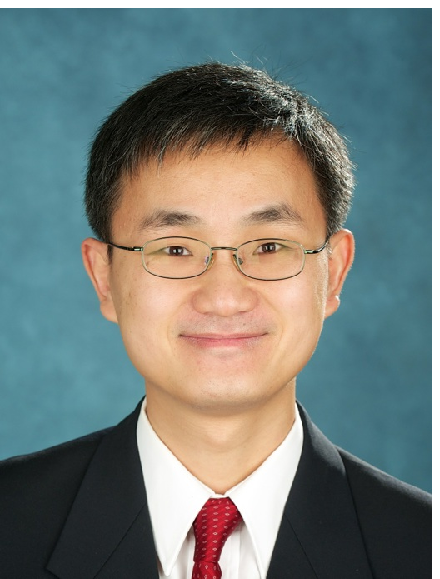}}]
{Wei Yu}
(S'97-M'02-SM'08-F'14) received the B.A.Sc. degree in Computer Engineering and Mathematics from the University of Waterloo, Waterloo, Ontario, Canada in 1997 and M.S. and Ph.D. degrees in Electrical Engineering from Stanford University, Stanford, CA, in 1998 and 2002, respectively. Since 2002, he has been with the Electrical and Computer Engineering Department at the University of Toronto, Toronto, Ontario, Canada, where he is now Professor and holds a Canada Research Chair (Tier 1) in Information Theory and Wireless Communications. His main research interests include information theory, optimization, wireless communications and broadband access networks.

Prof. Wei Yu served as an Associate Editor for IEEE Transactions on Information Theory (2010-2013), as an Editor for IEEE Transactions on Communications (2009-2011), as an Editor for IEEE Transactions on Wireless Communications (2004-2007), and as a Guest Editor for a number of special issues for the IEEE Journal on Selected Areas in Communications and the EURASIP Journal on Applied Signal Processing. He was a Technical Program Committee (TPC) co-chair of the Communication Theory Symposium at the IEEE International Conference on Communications (ICC) in 2012, and a TPC co-chair of the IEEE Communication Theory Workshop in 2014. He was a member of the Signal Processing for Communications and Networking Technical Committee of the IEEE Signal Processing Society (2008-2013). Prof. Wei Yu received an IEEE ICC Best Paper Award in 2013, an IEEE Signal Processing Society Best Paper Award in 2008, the McCharles Prize for Early Career Research Distinction in 2008, the Early Career Teaching Award from the Faculty of Applied Science and Engineering, University of Toronto in 2007, and an Early Researcher Award from Ontario in 2006.

Prof. Wei Yu is a Fellow of IEEE. He is a registered Professional Engineer in Ontario.
\end{IEEEbiography}

\end{document}